\begin{document}
\title{Achievable Degrees of Freedom on $K-$user MIMO Multi-way Relay Channel with Common and Private Messages}

\author{\IEEEauthorblockN{Mohamed Salah\IEEEauthorrefmark{2},
        Amr El-keyi\IEEEauthorrefmark{2}, Yahya Mohasseb\IEEEauthorrefmark{2}\IEEEauthorrefmark{1} and
        Mohammed Nafie\IEEEauthorrefmark{3}
        } 
    \IEEEauthorblockA{\IEEEauthorrefmark{2}Wireless Intelligent Networks Center (WINC), Nile University, Cairo, Egypt\\
        \IEEEauthorrefmark{1}Department of Communications, The Military Technical College, Cairo, Egypt\\
         \IEEEauthorrefmark{3}EECE Dept., Faculty of Engineering, Cairo University, Giza, Egypt\\       
        Email: \{\tt m.salah@nileu.edu.eg, aelkeyi@nileuniversity.edu.eg, \{mnafie,mohasseb\}@ieee.org\}
    }
}
\maketitle
\begin{abstract}
This paper investigates the achievable total degrees of freedom (DoF) 
of the MIMO multi-way relay channel that consists of $K$ users, where 
each user is equipped with $M$ antennas, and a decode-and-forward
relay equipped with $N$ antennas. In this channel, each user wants
to convey $K-1$ private messages to the other users in addition
to a common message to all of them. Due to the absence of direct
links between the users, communication occurs through the relay in
two phases: a multiple access channel (MAC) phase and a broadcast (BC)
phase. We derive cut-set bounds on the total DoF of the network,
and show that the network has DoF less than or equal to $K\min\{N,M\}$.
Achievability of the upper bound is shown by using
signal space alignment for network coding in the MAC phase, and
zero-forcing precoding in the BC phase. We show that introducing
the common messages besides the private messages leads to
achieving higher total DoF than using the private messages only.
\end{abstract}
\IEEEpeerreviewmaketitle
\section{Introduction}
\makeatletter{\renewcommand*{\@makefnmark}{} \footnotetext{ \vspace{0.02in}
This paper was made possible by a grant from the Egyptian National
Telecommunications Regulatory Authority. The statements made herein are solely
responsibility of the authors.}\makeatother}

The use of relays has drawn much research interest in wireless networks. 
This is mainly due to its ability to increase the overall throughput, 
improve the energy efficiency, and extend the coverage in the
case of power limited terminals \cite{pabst2004relay}. 
However, the main focus so far is on the mutli-way communication 
where an additional node acting as a relay is supporting the 
exchange of information between a group of users. 
Since characterizing the capacity of the MIMO relay networks 
is too complex, a lot of work was done to study such
networks using an alternative metric, which is the degrees of
freedom (DoF).  

Recently, Lee \textit{et. al.} studied the DoF of the
symmetric 3-user MIMO Y channel in \cite{lee2010degrees}. 
In this channel, each user equipped with $M$ antennas 
aims to deliver two private messages to the other two 
users via a decode and forward relay equipped with $N$ antennas. 
They showed that if $N \geq 3M/2$, then the 
cut-set bound given by $3M$ DoF is achievable. 
However, the novel contribution is not in achieving the DoF upper 
bound of the network but in their new idea of alignment for network coding. 
They exploit the concept of interference alignment and 
Physical Layer Network Coding (PLNC) to propose a Signal
Space Alignment technique for Network Coding (SSA-NC). 
The main idea of SSA is that each pair of users who want 
to exchange messages cooperatively designs beamforming 
matrices such that the bidirectional vectors of each pair 
of users can be aligned at the relay in the same subspace. 
Then, the relay can simply decodes and forwards the 
linear combination of every pairwise vectors. 
Finally, each user can decode desired vectors by 
subtracting its side information from the network coded received vectors.  
Thus, the phrase of SSA-NC means signal alignment at the relay 
and network coding at users for decoding. 

Based on the work of \cite{lee2010degrees}, 
total DoF of different scenarios has been characterized in 
\cite{chaaban2013degrees,lee2012achievable,tian2012signal,tian2013degrees,liu2014generalized}.
Particularly, the authors in \cite{chaaban2013degrees} investigated 
the total DoF of the 3-user asymmetric MIMO Y channel.
They showed that the general MIMO Y channel with $M_{1} \geq M_{2} \geq
M_{3}$ antennas at user 1, 2, and 3, respectively, and with $N$
antennas at the relay, has $\min \{2M_{2}+2M_{3},M_{1}+M_{2}+M_{3},2N\}$ DoF. 
They derived an outer bound on the DoF using cut-set bounds and genie-aided bounds. 
Then, the work in \cite{lee2012achievable} extended 
the channel model in \cite{lee2010degrees} to the case of $K$
users, where the $j$-th user with $M_j$ antennas sends $K-1$ independent 
messages via the relay equipped with $N$ antennas and each message achieves $d$ DoF. 
They proved the achievability of total DoF $=dK(K-1)$ using SSA-NC.
In \cite{tian2012signal}, the authors studied the DoF of the 4-user relay
network with two clusters of users where each cluster contains two users. 
Each user wants to exchange messages with the other user in the same cluster. 
They achieved a total DoF of $2\min\{2M,N\}$ 
using time division multiple access and SSA-NC, where $M$ and 
$N$ are the number of antennas at each user and the relay, respectively.
Then, they extended this work to the case of $L$-clusters, $K$-user 
MIMO multi-way Relay Channel (mRC) with no direct links, where users in each 
cluster wish to exchange messages within the cluster \cite{tian2013degrees}.  
Finally, the authors of \cite{liu2014generalized} generalized the work of 
\cite{lee2012achievable,chaaban2013degrees,tian2012signal,tian2013degrees} to the case where 
each user can arbitrarily select one or more partners to conduct independent information exchange. 
They showed that the upper bound on the total DoF of 
$\min\{\sum\nolimits_{i=1}^K M_i,2\sum\nolimits_{i=2}^K M_i,2N\}$ 
is tight under the antenna configuration 
$N \geq \max\{\sum\nolimits_{i=1}^K M_i-M_s-M_t+d_{s,t}|\forall s,t\}$, 
where $M_i$ and $N$ are the number of antennas at node $i$ and the relay, 
respectively, and $d_{s,t}$ denotes the DoF of the message exchanged between nodes $s$ and $t$. 

In this paper, we consider the network information flow problem 
for a $K$-user Gaussian MIMO mRC and analyze its DoF. 
In this channel, all the $K$ users are equipped with $M$ antennas 
and the relay is equipped with $N$ antennas.
However, unlike most of the prior work where the users were 
assumed to exchange private messages only 
\cite{lee2010degrees, chaaban2013degrees,lee2012achievable,tian2012signal,tian2013degrees,liu2014generalized}, 
we consider both common and private messages.
The $j$-th user intends to convey $K$ messages to the other users by the help 
of an intermediate relay; $K-1$ private messages to the other users in 
addition to a common message to all of them, where $j \in \{1,\cdots, K\}$. 
We first derive cut-set bound on the total DoF, 
and show that the network has $K\min\{N,M\}$ DoF.
By combining PLNC and SSA, we then propose a novel and systematic way of beamforming 
design at the users and at the relay to align signals, and consequently, 
we can efficiently implement PLNC. 
Additionally, we show that our proposed scheme can achieve this upper bound.  

The rest of this paper is organized as follows: In Section
\ref{model}, we describe our network and the main assumptions.
Next, we derive cut-set bounds on the total DoF in Section \ref{sym}. 
Then, we present the proposed achievable scheme in Section \ref{ach}.
Finally, our conclusions are stated in Section \ref{conc}. \\\\
\textit{Notation}

In this work, boldface uppercase letters denote matrices and
boldface lowercase letters are used for vectors. 
$\mathbb{C}$ denotes the complex space. 
For any general matrix $\boldsymbol{G}$, $\boldsymbol{G}^H$,
$\boldsymbol{G}^\dagger$, $\mathcal{S}\{\boldsymbol{G}\}$ and
$\mathcal{N}\{\boldsymbol{G}\}$ denote the Hermitian transpose, the pseudo
inverse, the span of the column vectors, and the null space of $\boldsymbol{G}$,
respectively.
\begin{figure}
\begin{subfigure}{.5\textwidth}
  \centering
  \includegraphics[width=.9\linewidth]{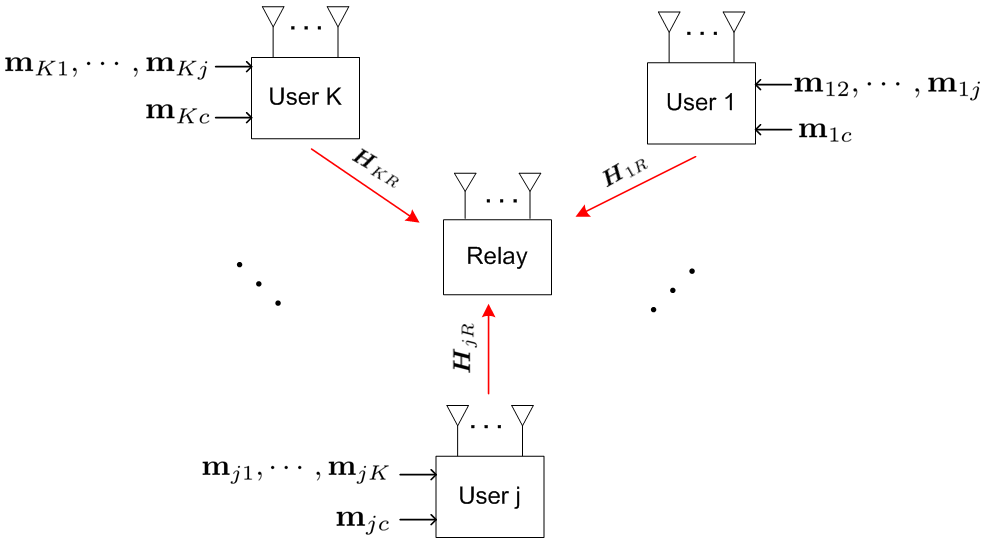}
  \caption{}
  \label{fig:F1}
\end{subfigure}%
\hfill
\\\\
\hrule
\hfill
\\\\
\begin{subfigure}{.5\textwidth}
  \centering
  \includegraphics[width=.9\linewidth]{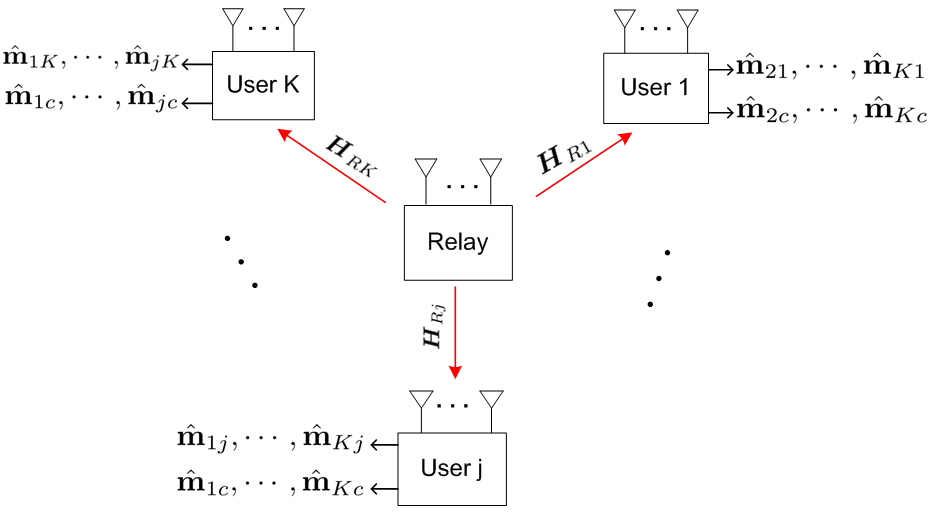}
  \caption{}
  \label{fig:F2}
\end{subfigure}%

\caption{System Model of $K$user MIMO mRC: \{a) MAC phase, (b) Broadcast phase\}}
\label{sys_model}
\end{figure}
\section{System Model}
\label{model}
\subsection{Channel Model}
The system model for $K$-user MIMO mRC is shown in 
\newline Fig. \ref{sys_model}. 
In this channel, $K$ users, each equipped with $M$ antennas, exchange 
messages in a pairwise manner with the help of an intermediate $N$- 
antennas relay node. Specifically, user $j$ wants to send $K$ messages; 
$K-1$ private messages $\mathbf{m}_{ji}$ to user $i$ in addition to 
a common message $\mathbf{m}_{jc}$ to the remaining $K-1$ users.
Also, the $\textit{j}$-th user intends to decode all the other users' 
messages; $K-1$ private messages $\hat{\mathbf{m}}_{ij}$ in addition 
to $K-1$ common messages $\hat{\mathbf{m}}_{ic}$,where $i,j \in \{1,\cdots,K\}, i \neq j $.
It is assumed that the users can communicate only through the relay 
and no direct links exist between any pairs of users. 
Additionally, we assume that the channel between each user 
and the relay is reciprocal.
The transmission takes place over two phases: MAC and BC. 
In the MAC phase, each user transmits its messages to the relay. 
The received signal at the relay is given by
\begin{equation}
  \boldsymbol{\mathrm{y}}_{r} = \sum\limits_{j=1}^K \boldsymbol{H}_{jR}\boldsymbol{\mathrm{x}}_{j}
  + \boldsymbol{\mathrm{z}}_{r}
\end{equation}
where $\boldsymbol{\mathrm{y}}_{r}$ and $\boldsymbol{\mathrm{z}}_{r}$ 
denote the $ N \times 1 $ received signal vector and 
additive white Gaussian noise (AWGN) vector with unit variance at the relay, respectively. 
$\boldsymbol{\mathrm{x}}_{j}$ is an $M \times 1$
vector representing the transmitted signal from user $j$.
$\boldsymbol{H}_{jR}$ is the $N \times M$ channel matrix 
from user $j$ to the relay, where the entries of the 
channel matrices $\boldsymbol{H}_{jR}$, for $ j \in \{1,\cdots,K\}$, 
are independently and identically distributed zero mean 
complex Gaussian random variables with unit variance. 
Thus, we guarantee that each of the channel matrices is full rank almost surely. 
In the BC phase, the received signal at user $j$ from the relay is given by
\begin{equation}
  \boldsymbol{\mathrm{y}}_{j} = \boldsymbol{H}_{Rj}\boldsymbol{\mathrm{x}}_{r}
  + \boldsymbol{\mathrm{z}}_{j}
\end{equation}
where $\boldsymbol{\mathrm{y}}_{j}$ and $\boldsymbol{\mathrm{z}}_{j}$ 
are the $ M \times 1 $ received vector and AWGN vector at the 
$j$-th user, respectively.
$\boldsymbol{\mathrm{x}}_{r}$ is an $N \times 1$ vector representing
the transmitted signal vector from the relay.
Finally, $\boldsymbol{H}_{Rj}$ denote the $M \times N$ channel matrix 
from the relay to user $j$. 
Throughout this paper, we assume 
that perfect channel state information is available at all users and at 
the relay in both phases. Also, we assume that all the users have the same power constraint. 
Moreover, all nodes are assumed to operate in full-duplex mode. \footnote{All the DoF results can be applied to the half-duplex case with only a multiplicative factor of $\dfrac{1}{2}$}

\subsection{Degrees of freedom}
In the absence of precise capacity characterizations, researchers
have pursued asymptotic and/or approximate capacity characterizations. 
One of the most important capacity characterizations is the DoF of the
network \cite{jafar2008degrees,cadambe2008degrees,jafar2007degrees}. 
The DoF metric is primarily concerned with the
limit where the total transmit power approaches infinity, while the values
of channel coefficients and the local noise power remain unchanged.
Thus, if we denote by $C(P)$ the sum capacity with the total transmit
power $P$, then the DoF metric $\eta$ is defined as
\begin{equation*}
\eta = \lim_{P \to \infty} \frac{C(P)}{\log(P)}
\label{DoF}
\end{equation*}

In other words, we can consider the DoF as the number of signaling dimensions,
where 1 signal dimension corresponds to one interference-free
AWGN channel with SNR increasing proportionately 
with $P$ as $P$ approaches infinity. 
Let $d_{ij}$ denote the number of
interference-free streams sent from user $i$ to user $j$. Also, we
define $d_{jc}$ as the number of interference-free streams sent from user
$j$ to the remaining $K-1$ users. 
The total DoF of the MIMO mRC with private and common messages is 
given by the total number of interference-free streams sent through the network. 
As a result, the total DoF $D_{t}$ is equal to
  \begin{equation}
    D_{t} = \sum\limits_{j=1}^K \sum\limits_{i=1,i \neq j}^K d_{ji}
     +(K-1)\sum\limits_{j=1}^K d_{jc}
 \end{equation}
Notice that in the above equation the contribution of the common
messages to the total DoF is weighted by a factor $K-1$ as each common
message is directed to $K-1$ users.
\section{Upper Bounds on the Degrees of Freedom }
\label{sym}
In this section, we provide upper bounds on the total DoF
of the Gaussian MIMO mRC with common and private messages. 
One way to obtain upper bounds for the $K$-user MIMO mRC 
is by using the cut-set bounds. 
In \cite{lee2010degrees}, Lee \textit{et al}. derived cut set
bounds for the Gaussian MIMO Y channel with equal number of
antennas at all users, where the users exchange private messages only. 
We introduce the common messages to the network and generalize 
these bounds for arbitrary number of users. 
Then, using the cut set-bounds, we can obtain the DoF bounds.
The main result of this section is stated in the following theorem.
\newtheorem{theorem}{Theorem}
\begin{theorem}
\label{the1} The total number of DoF of a Gaussian MIMO mRC is given by
\begin{equation}
     D_{t} =  K\min \{N,M\}
    \label{UB}
\end{equation}
\end{theorem}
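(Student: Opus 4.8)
The plan is to establish the converse bound $D_t \le K\min\{N,M\}$ by applying the cut-set bound to each user individually in its role as a receiver during the BC phase; the matching achievability is deferred to the construction of Section \ref{ach}. The key observation is that every message destined for a given user $j$ --- namely the $K-1$ private messages $\hat{\mathbf{m}}_{ij}$ together with the $K-1$ common messages $\hat{\mathbf{m}}_{ic}$ for $i \neq j$ --- must reach user $j$ exclusively through the relay-to-user link $\boldsymbol{H}_{Rj}$, since no direct links exist between users.

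First I would isolate user $j$ by a cut separating it from the rest of the network. In the BC phase user $j$ observes $\boldsymbol{\mathrm{y}}_{j} = \boldsymbol{H}_{Rj}\boldsymbol{\mathrm{x}}_{r} + \boldsymbol{\mathrm{z}}_{j}$, so the information it can collect is bounded by $I(\boldsymbol{\mathrm{x}}_{r};\boldsymbol{\mathrm{y}}_{j})$. Since $\boldsymbol{H}_{Rj}$ is an $M \times N$ matrix of rank $\min\{N,M\}$, the corresponding point-to-point MIMO capacity grows like $\min\{N,M\}\log P$, which yields the per-user DoF constraint
\begin{equation}
\sum_{i \neq j} d_{ij} + \sum_{i \neq j} d_{ic} \le \min\{N,M\}.
\end{equation}

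Next I would sum this inequality over all $j \in \{1,\dots,K\}$. On the right-hand side this produces $K\min\{N,M\}$. On the left-hand side, the private terms $\sum_{j}\sum_{i\neq j} d_{ij}$ reproduce (after relabeling ordered pairs) the full private contribution to $D_t$, while each common term $d_{ic}$ is counted once for every receiving user $j \neq i$, that is, exactly $K-1$ times, so $\sum_{j}\sum_{i \neq j} d_{ic} = (K-1)\sum_{i} d_{ic}$. Matching these against the definition of $D_t$ then gives $D_t \le K\min\{N,M\}$.

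The main obstacle --- really the only delicate point --- is the bookkeeping in this last step: one must verify that summing the per-user receiving constraints reconstructs $D_t$ exactly, and in particular that the weight $K-1$ attached to the common messages emerges automatically from the fact that each common message is decoded by $K-1$ distinct users, rather than being inserted by hand. I would also check that competing cuts (for instance the cut isolating the relay, which limits it to $N$ transmit dimensions, or the per-user transmit constraint in the MAC phase) do not give a tighter bound; because those cuts control only the \emph{unweighted} stream count at the relay, they turn out to be looser than the per-user receiving cut, confirming that $K\min\{N,M\}$ is the binding converse.
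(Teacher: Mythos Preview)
Your proposal is correct and follows essentially the same route as the paper: isolate each user as a receiver, bound the incoming private and common streams by the rank $\min\{N,M\}$ of the relay-to-user link, and sum over the $K$ users so that each $d_{ic}$ picks up the weight $K-1$ and the left-hand side reproduces $D_t$ exactly. The only difference is cosmetic: the paper also records the MAC-phase cut $\sum_{j\neq 1}(d_{j1}+d_{jc})\le\min\{N,(K-1)M\}$ before taking the minimum with the BC cut, whereas you correctly observe that this MAC constraint is never tighter than the BC one and can be omitted.
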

\begin{proof}
 
Let us label the set of nodes in the network by $\mathcal{P} \triangleq 
\{U_1,\cdots,U_K,R\}$, where $U_j$ denotes user $j$, for $j \in \{1,\cdots,K\}$, 
and $R$ denotes the relay. Moreover, we define the two sub-sets $ \mathcal{T}$  and $\mathcal{T}^{c}$
as the transmitting and receiving nodes, respectively, 
where $\mathcal{T}$ and $\mathcal{T}^{c} \in \mathcal{P}$. 
By considering only the messages intended to $U_1$ 
from $\{U_2,\cdots,U_K\}$ and writing the cut in the MAC phase 
where $\mathcal{T} = \{U_2,\cdots,U_K\}$ and $\mathcal{T}^{c} {=} \{R,U_1\}$, and 
applying the cut-set theorem in \cite{cover2012elements} to this cut leads to
the following inequality
\begin{equation}
\sum\limits_{j=2}^K R_{j1} + \sum\limits_{j=2}^K R_{jc} \leq  I(\boldsymbol{\mathrm{x}}_2,\cdots,
   \boldsymbol{\mathrm{x}}_K;\boldsymbol{\mathrm{y}}_r \mid \boldsymbol{\mathrm{x}}_1)
\end{equation}
During the BC phase, the messages from  $\{U_2,\cdots,U_K\}$ will be relayed 
to $U_1$, 
we consider $\mathcal{T} = \{R,2,\cdots,K\}$ and $\mathcal{T}^{c} = \{U_1\}$, we get 
\begin{equation}
\sum\limits_{j=2}^K R_{j1} + \sum\limits_{j=2}^K R_{jc} \leq I(\boldsymbol{\mathrm{x}}_r;
\boldsymbol{\mathrm{y}}_1)
\end{equation}
where $I(\boldsymbol{\mathrm{x;y}})$ is the mutual information
between random variables $\boldsymbol{\mathrm{x}}$ and
$\boldsymbol{\mathrm{y}}$, $R_{ij}$ is the maximum information
transfer rate from $U_i$ to $U_j$, $i \neq j$, and $R_{jc}$ is the
maximum information transfer rate from $U_j$ to the remaining 
$K-1$ users. Using (5) and (6), we can bound the DoF as follows
\begin{eqnarray}
 \sum\limits_{j=2}^K d_{j1}+d_{jc} &\leq& \min \{N,(K-1)M\} \\
 \sum\limits_{j=2}^K d_{j1}+d_{jc} &\leq& \min \{N,M\}
\end{eqnarray}
Combining (7) and (8), we get
\begin{equation}
 \sum\limits_{j=2}^K d_{j1}+d_{jc} \leq \min \{N,M\}
\end{equation}
Similarly, by considering the messages intended to $\{U_2,\cdots,K\}$, respectively, 
and taking the appropriate cuts during MAC and BC phases, we get $K-1$ 
equations similar to (9) but with different intended user, By adding 
the $K$ equations we obtain
 \begin{equation}
 D_{t} \leq  K\min \{N,M\}
 \end{equation}
 
\end{proof} 
\section{Achievability} \label{ach}
In this section, we describe the transmission
strategies that can achieve the optimal DoF mentioned in Theorem \ref{the1}. 
We split our proof into two cases according to the relationship
between $M$ and $N$. Our achievability schemes depend mainly on
finding linearly independent subspaces in which each pair of users
can align their bi-directional vectors. 

\subsubsection{Case 1: $ N < M$}\label{C1}
In this case, $KN$ is the dominant term in the RHS of (53). In the
MAC phase, we design the beamforming matrices at the transmitting
users such that the signal vector coming from each user is
partially aligned at the relay with that transmitted from user
$1$. We use only the common messages in our achievability proof in
this case. Thus, the transmitted signals of the users in the
uplink are given by
 \begin{eqnarray}
          \boldsymbol{\mathrm{x}}_{1} &{=}& \sum\limits_{j=2}^K  \boldsymbol{V}^j_{1c}
          \boldsymbol{\mathrm{s}}_{1c} \\
          \boldsymbol{\mathrm{x}}_{j} &{=}& \boldsymbol{V}_{jc} \boldsymbol{\mathrm{s}}_{jc}, \quad
          2 \leq j \leq K
 \end{eqnarray}
where $\boldsymbol{\mathrm{s}}_{jc}$ is $N/(K-1) \times 1$ data
vector, for $j \in \{1,\cdots,K\}$, containing common messages to
be delivered from user $j$ to the remaining $K-1$ users. The
beamforming matrices $\boldsymbol{V}^j_{1c}$ and
$\boldsymbol{V}_{jc}$ of dimensions $M \times N/(K-1)$, for $j \in
\{2,\cdots,K\}$, are chosen such that
\begin{equation}
   \mathcal{S}\{\boldsymbol{H}_{1R}\boldsymbol{V}^j_{1c}\} =
   \mathcal{S}\{\boldsymbol{H}_{jR}\boldsymbol{V}_{jc}\}, \quad 2 \leq j \leq K
 \end{equation}
where the beamforming matrices at user 1 are chosen randomly,
and the rest are chosen as follows
\begin{equation}
     \boldsymbol{V}_{jc} = \boldsymbol{H}_{jR}^\dagger
     \boldsymbol{H}_{1R}\boldsymbol{V}^j_{1c}.
\end{equation}

The relay will receive a superposition of the transmitted signals
from all the network users. Thus, the received signal at the relay
is given by
\begin{equation}
  \boldsymbol{\mathrm{y}}_{r} = \sum\limits_{j=2}^K \boldsymbol{H}_{1R} \boldsymbol{V}^j_{1c}(\boldsymbol{\mathrm{s}}_{1c}+\boldsymbol{\mathrm{s}}_{jc})
   + \boldsymbol{\mathrm{z}}_{r}.
\end{equation}
Let $\boldsymbol{\mathrm{w}}_{1j}$ where $j \in \{2,\cdots,K\}$ be
the $N/(K-1) \times 1$ vector which contains a noisy linear
combination of $\boldsymbol{\mathrm{s}}_{1c}$ and
$\boldsymbol{\mathrm{s}}_{jc}$. The relay can obtain
$\boldsymbol{\mathrm{w}}_{1j} =
\boldsymbol{F}_{1j}^H\boldsymbol{\mathrm{y}}_r$, where
$\boldsymbol{F}_{1j} \in \mathbb{C}^{N \times N/(K-1)}$ is the
zero-forcing matrix used by the relay to remove the undesired
terms. Thus the relay obtains
\begin{equation}
   \boldsymbol{\mathrm{w}}_{1j} = \boldsymbol{F}_{1j}^{H}\boldsymbol{H}_{1R}\boldsymbol{V}^2_{1c}(\boldsymbol{\mathrm{s}}_{1c}+\boldsymbol{\mathrm{s}}_{jc})
   + \boldsymbol{F}_{1j}^{H}\boldsymbol{\mathrm{z}}_{r}.
\end{equation}

Since all the decoded vectors at the relay contain common
messages, these vectors should be transmitted to all the users in
the downlink. In the broadcast phase, the relay transmits
\begin{equation}
  \boldsymbol{\mathrm{x}}_{r} = \sum\limits_{j=2}^K \boldsymbol{T}_{1j}\boldsymbol{\mathrm{w}}_{1j}
\end{equation}
where $\boldsymbol{T}_{1j}$'s  are randomly chosen precoding
matrices, of dimensions $N \times N/(K-1)$. Now, we consider the
received signal at users $j$ which is given by
\begin{equation}
 \boldsymbol{\mathrm{y}}_{j} = \boldsymbol{H}_{Rj}\sum\limits_{i=2}^K \boldsymbol{T}_{1i}\boldsymbol{\mathrm{w}}_{1i}
 + \boldsymbol{\mathrm{z}}_{j}.
 \end{equation}
We use zero-forcing at user $j$ to get
$\boldsymbol{\mathrm{w}}_{1j}$, for $j \in \{2,\cdots,K\}$. For
example, let $\boldsymbol{N}_{12} \in \mathbb{C}^{M \times
N/(K-1)}$ be the zero-forcing matrix used by user 2 such that
\begin{equation}
 \boldsymbol{N}_{12} \subseteq  \mathcal{N}\{
 (\boldsymbol{H}_{R2}\boldsymbol{T}_{13},\cdots,\boldsymbol{H}_{R2}\boldsymbol{T}_{1K})^{T}\}.
 \end{equation}
As a result, user 2 can obtain
$\boldsymbol{\mathrm{w}}_{12}=\boldsymbol{N}^{H}_{12}\boldsymbol{\mathrm{y}}_{2}$.
Similarly, user 2 can obtain $\boldsymbol{\mathrm{w}}_{1j} =
\boldsymbol{N}^{H}_{1j}\boldsymbol{\mathrm{y}}_{1}$, for $j \in
\{3,\cdots,K\}$. Now, user 2 can simply decode
$\boldsymbol{\mathrm{s}}_{1c}$ by removing the contribution of
$\boldsymbol{\mathrm{s}}_{2c}$ from
$\boldsymbol{\mathrm{w}}_{12}$, then decodes
$\boldsymbol{\mathrm{s}}_{jc}$ by subtracting
$\boldsymbol{\mathrm{s}}_{1c}$ from $\boldsymbol{\mathrm{w}}_{1j}$
achieving $N$ DoF. Finally, user 1 recovers
$\boldsymbol{\mathrm{s}}_{jc}$ by subtracting
$\boldsymbol{\mathrm{s}}_{1c}$ from
$\boldsymbol{\mathrm{w}}_{1j}$, for $2 \leq j \leq K$, achieving
$N$ DoF. Thus, this scheme achieves a total of $KN$ DoF which is
equal to the upper bound. It should be noted that if $N$ is not 
divisible by $K-1$, we consider a $(K-1)$-time-slot-symbol extension of
the channel \cite{jafar2011interference}, and then proceed with designing 
the transmit strategy as explained above.

\subsubsection{Case 2: $ N \geq M$ }
In this case, {$D_t = KM$}. We shut down $N-M$ antennas at the
relay and use the same achievability scheme of case 1 where each user sends $M/(K-1)$ common messages to the remaining $K-1$ users. \\
This completes the proof of the achievability of Theorem
\ref{the1}.
\newtheorem*{corollary}{Corollary}
\begin{corollary}
The total DoF of the K-user MIMO mRC with private and common messages is
greater than the total DoF achievable with private messages only.
\end{corollary}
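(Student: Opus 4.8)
The plan is to compare the best total DoF achievable in the two settings and to exhibit a strict gap for $K \geq 3$. From Theorem \ref{the1} and the achievability scheme of Section \ref{ach}, the network with both common and private messages attains $D_t = K\min\{N,M\}$. It therefore suffices to upper bound the total DoF attainable when only private messages are used, and to show that this bound is strictly smaller.

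First I would derive an outer bound on the private-only total DoF $D_t^{\mathrm{p}}=\sum_{j=1}^K\sum_{i\neq j} d_{ji}$. The key observation is that every private stream must be decoded in network-coded form and forwarded by the relay, whose received and transmitted signals live in an $N$-dimensional space. Under SSA-NC the bidirectional streams of each unordered pair $\{i,j\}$ must occupy a common subspace at the relay of dimension $d_{\{i,j\}}$, and distinct pairs must use linearly independent subspaces so that the relay can separate them; hence $\sum_{\{i,j\}} d_{\{i,j\}} \leq N$. Since each such subspace carries exactly two message directions, $D_t^{\mathrm{p}} = 2\sum_{\{i,j\}} d_{\{i,j\}} \leq 2N$. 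Combining this relay bottleneck with the per-user cut-set bounds of Section \ref{sym} yields $D_t^{\mathrm{p}} \leq \min\{KM,\,2(K-1)M,\,2N\}$, which matches the outer bound for full private message exchange in \cite{liu2014generalized}.

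Finally I would specialize to the regime of Case 1, $N<M$, where the benefit is sharpest. There $2N < 2M \leq \min\{KM,\,2(K-1)M\}$ for $K\geq 2$, so $D_t^{\mathrm{p}} \leq 2N$, whereas the common-and-private scheme achieves $D_t = K\min\{N,M\}=KN$. For $K\geq 3$ we have $KN > 2N$, so introducing common messages yields a strictly larger total DoF. The main obstacle is establishing the $2N$ cap on the private-only DoF rigorously: one must argue that, under decode-and-forward with network coding, a single relay signalling dimension can serve at most two private message directions, whereas a common message occupying the same dimension is credited to $K-1$ receivers and is therefore weighted by $K-1$ in $D_t$. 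This $(K-1)$-versus-$2$ efficiency gap is precisely what drives the strict inequality.
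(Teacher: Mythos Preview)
Your route is quite different from the paper's. The paper's ``proof'' is a one-line pointer to Table~\ref{table:dof_gain}, which tabulates, across five antenna regimes, the private-only achievable DoF (imported from prior work such as \cite{liu2014generalized,tian2013degrees}) against the $K\min\{N,M\}$ of Theorem~\ref{the1}; the reader sees that the latter dominates, strictly in Cases~1--4 and with equality in Case~5. You instead attempt a self-contained analytic argument in the regime $N<M$, which has the virtue of explaining \emph{why} the gap exists (the $(K-1)$-versus-$2$ weighting per relay dimension). One caution, however: your derivation of the $2N$ cap via ``under SSA-NC distinct pairs must occupy linearly independent subspaces at the relay'' is a statement about one particular achievability scheme, not a converse for the private-only problem; it cannot by itself bound $D_t^{\mathrm{p}}$. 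The valid step is simply to invoke the information-theoretic outer bound $\min\{KM,2(K-1)M,2N\}$ from \cite{liu2014generalized} directly, after which your comparison $KN>2N$ for $K\geq 3$ goes through. Note also that your single-regime argument does not expose the fact, visible in the paper's table, that the inequality degenerates to equality once $N\geq\tfrac{K^2-3K+3}{K-1}M$.
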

\begin{proof}
Table \ref{table:dof_gain} compares the total DoF of the MIMO mRC
in both cases.
\end{proof}
\begin{table*}\footnotesize
\caption{\\\textsc{Achievable Total DoF}}
\label{table:dof_gain}
\centering
\begin{tabular}{|c|c|c|c|}
  \hline
  \textbf{Case} & \textbf{Configuration} & \textbf{mRC with Private messages only} & 
   \textbf{mRC with Private and Common messages}\\
  \hline
  1 & $ N {\leq} M$ & $D_t=2N$ & $D_t=KN$ \\
  \hline
  2 & $ M \leq N \leq \dfrac{2K^2-2K}{K^2-K+2}M$ & $D_t=2N$ & $D_t=KM$ \\
  \hline
  3 & $ \dfrac{2K^2-2K}{K^2-K+2}M \leq N \leq \dfrac{K}{2}M$  & $D_t=\dfrac{4K^2-4K}{K^2-K+2}M$ & $D_t=KM$ \\
  \hline
 
  4 & $ \dfrac{K}{2}M \geq N \geq \dfrac{K^2-3K+3}{K-1}M$  & $D_t= \dfrac{K(K-1)}{K^2-K+2}[2N+(4-K)]M $ & $D_t= KM$\\
  
  \hline 
  5 & $  N \geq \dfrac{K^2-3K+3}{K-1}M $  & $D_t=KM$ & $D_t=KM$\\
  \hline
\end{tabular}
\label{T1}
\end{table*}

In fact, the limited number of antennas at the relay always represented a 
bottleneck in achieving higher DoF.  
This problem becomes more noticeable as the number of users in the network increases, 
especially when direct links are missing. 
For example, when private messages only are used \cite{liu2014generalized},
the respective DoF bound can be easily achieved when $K=3$. 
However, as the number of users increases, i.e. $K>3$, the 
DoF upper bound cannot be achieved using private messages only when 
$\dfrac{K}{2} \geq \dfrac{N}{M} \geq \dfrac{K^2-3K+3}{K-1}$ as Table \ref{table:dof_gain} depicts.
We conjecture that the number of antennas at the relay in this region are 
not sufficient for achieving the $KM$ DoF bound.  
 
On the other hand,  Table \ref{table:dof_gain} shows that introducing the common messages
enables us to achieve up to $KN$ DoF, where the upper bound can be achieved 
by using the common messages only. 
This can be attributed to the fact that each common message 
is directed to $K-1$ users, and hence, its contribution to the 
total DoF scales up with increasing the number of users in the network.
Additionally, the table shows that the gain obtained due to the existence 
of common messages increases linearly with $K$, when $N \leq M$.  

\section{CONCLUSION} \label{conc} 
The K-user MIMO mRC with private and common messages
and with equal number of antennas at the users was studied. 
In this network, each user aims to exchange $K$ messages with 
the other $K-1$ users via a common relay; 
$K-1$ private messages to the other users 
in addition to a common message to all of them. 
We derived cut-set bounds on the total DoF, and showed that
this upper bound, $K\min\{N,M\}$, is achievable.
Thus, we provided a simple way of aligning signals at the users 
and at the relay.  

We compared between the achievable DoF obtained in our network and 
those obtained from the one containing private messages only. 
Consequently, we showed that introducing common messages besides the
private messages can lead to achieving higher DoF gain. 
Moreover, we showed that this gain increases linearly with 
the number of users in the network.  
Additionally, cut-set bounds of different networks can be 
achieved using private and common messages. 
An interesting direction for future study would be to
investigate the total DoF of multicast networks, where each user
can send common messages to a group of users in addition to
private messages.

\bibliographystyle{IEEEtran}
\bibliography{IEEEabrv,Draft_1}

\end{document}